\def\BibTeX{{\rm B\kern-.05em{\sc i\kern-.025em b}\kern-.08em
    T\kern-.1667em\lower.7ex\hbox{E}\kern-.125emX}}
\newcommand{\F}{\mathbb{F}}
\newcommand{\E}{\mathbb{E}}
\newcommand{\mC}{\mathcal{C}}
\theoremstyle{definition}
\newtheorem{theorem}{Theorem}
\newtheorem{lemma}{Lemma}
\newtheorem{remark}{Remark}
\newtheorem{definition}{Definition}
\newtheorem{proposition}{Proposition}
\newtheorem{notation}{Notation}
\newtheorem{example}{Example}
\newtheorem{problem}{Problem}
\begin{document}

\title{The Coverage Depth Problem in DNA Storage \\  Over Small Alphabets 
%{\footnotesize \textsuperscript{*}Note: Sub-titles are not captured in Xplore and
%should not be used}
\thanks{The research of M.B. is partially supported by the EuroTech Program. The research of A.R. is supported by the Dutch Research Council through grants VI.Vidi.203.045, 
OCENW.KLEIN.539, and by the European Commission. The research of E.Y. is funded by the European Union (ERC, DNAStorage, 101045114). Views and opinions expressed are however those of the authors only and do not necessarily reflect those of the European Union or the European Research Council Executive Agency. Neither the European Union nor the granting authority can be held responsible for them.} %Identify applicable funding agency here. If none, delete this.}
}

\author{%
   \IEEEauthorblockN{\textbf{Matteo Bertuzzo}\IEEEauthorrefmark{1}, \textbf{Alberto Ravagnani}\IEEEauthorrefmark{1}, and \textbf{Eitan~Yaakobi}\IEEEauthorrefmark{2}}
    \IEEEauthorblockA{\IEEEauthorrefmark{1}%
    Dept. of Mathematics and Computer Science, Eindhoven University of Technology, the Netherlands}
    \IEEEauthorblockA{\IEEEauthorrefmark{2}%
    Faculty of Computer Science, %\\
    Technion--Israel Institute of Technology, Israel}
    \texttt{\{m.bertuzzo, a.ravagnani\}@tue.nl, yaakobi@cs.technion.ac.il   \vspace{-1ex}}
 \vspace{-3ex}}
%\IEEEaftertitletext{}

\maketitle

\begin{abstract}
%THIS PAPER IS ELIGIBLE FOR THE STUDENT PAPER AWARD.
The coverage depth problem in DNA data storage is about minimizing the expected number of reads until all data is recovered. When they exist,
MDS codes offer the best performance in this context. 
This paper focuses 
on the scenario where the base field is not large enough to allow the existence of MDS codes. We investigate the performance 
for the coverage depth problem
of codes defined over a small finite field, providing closed formulas for the expected number of reads for various code families. We also compare the results with the theoretical bounds in asymptotic regimes.
The techniques we apply range from probability, to duality theory and combinatorics.
\end{abstract}

\section{Introduction}
The volume of data generated globally is growing at an exponential rate, creating an escalating demand for storage that outpaces the current supply~\cite{b1}. This motivates the urgent need for innovative, high-density, efficient, and durable storage solutions that outperform existing technologies. In this context, DNA-based storage systems emerge as a promising alternative to traditional storage media, particularly for long-term data archiving, due to their exceptional density, durability, and low maintenance costs~\cite{b2},~\cite{b3}.

To store data in DNA, a multi-step process is carried out: first, the original data is encoded, transforming it from a string of bits into sequences based on the DNA alphabet ${A, C, G, T}$. These sequences are then divided into blocks, and in the synthesis step, artificial DNA molecules, or \emph{strands}, are produced, with multiple copies of each strand being generated. Once synthesized, the DNA strands are stored in a container.

When a user wants to retrieve the stored information, the sequencing process is performed: the strands are translated back into DNA sequences, called \emph{reads}, which are copies of the previously synthesized strands and may contain errors. Finally, these sequences are decoded to recover the user's original information.
One key distinction between retrieving data from DNA and traditional storage media lies in this step: the DNA strands are read \emph{randomly}.

Although DNA has high potential as a storage medium~\cite{b4, b5, b6, b7, b8, b9, b10}, the slow throughput and high costs compared to alternative storage techniques, resulting from the efficiency of DNA sequencers, represent a drawback~\cite{b2, b11, b12}. This problem is related to the concept of the \emph{coverage depth}~\cite{b13}, defined as the ratio between the number of sequenced reads and the number of designed DNA strands.

This paper focuses on the \emph{coverage depth problem}, recently introduced in~\cite{cover_your_bases}, which involves minimizing the number of reads required to retrieve a particular piece of data encoded in DNA. We concentrate on the case where all the user's original information should be recovered entirely. Suppose we have $n$ encoded strands starting from $k$ information strands representing the original data: taking inspiration from the coupon collector's problem~\cite{ccp1, ccp2, ccp3, ccp4}, if the $k$ information strands are encoded by an MDS code, the expected number of reads to decode all the information strands is $n(H_n - H_{n-k})$, where $H_i$ is the $i$-th harmonic number. This result is the best we can achieve in minimizing the expected number of reads.

The scenario of the DNA coverage depth problem where the entire information must be recovered was extended in~\cite{ext1, ext2} to support the setup of the combinatorial composite of DNA shortmers~\cite{ext3}, and in~\cite{ext4} for the setup of composite DNA letters~\cite{b9}.

Motivated by the results and observations in~\cite{cover_your_bases}, in this work we focus on finding closed formulas for the expected number of reads needed to recover the entire information and, consequently, on finding codes that are optimal for minimizing the expected number of reads. The remainder of this paper is organized as follows. In Section~\ref{problem_statement}, we formally define the two problems that we will discuss throughout the paper. Section~\ref{simplex_section} provides a closed formula for computing the expectation for simplex codes. In Section~\ref{duality_section}, we show an important and general duality result, which is then used in Section~\ref{Hamming_section} to obtain a closed formula also for the expectation for Hamming codes. In Section~\ref{estimates_section}, we study the asymptotic behavior of the formulas obtained previously.

\section{Problem Statement} \label{problem_statement}
In this paper, $q$ is a prime power and $\mathbb{F}_q$ is the finite field with $q$ elements. We let $k$ and $n$ be positive integers with $2 \leq k \leq n$. Furthermore, for a positive integer $m$, we %let $[m] = \{ 1, \dots, m \}$ and
denote by $H_m$ the $m$-th harmonic number: 
$$H_m=\sum_{i=1}^m \frac{1}{i}.$$

In a typical DNA-based storage system, the data is stored as a length-$k$ vector whose entries are strands of length $\ell$ over the alphabet $\Sigma = \{ A,C,G,T \}$. In particular, the encoded strands are elements of
$(\Sigma^\ell)^k$. To allow using coding theory tools, we embed $\Sigma^\ell$ into $\mathbb{F}_q$ and use a $k$-dimensional linear block code $\mathcal{C} \subseteq \mathbb{F}_q^n$ to encode an information vector $(x_1, \dots, x_k) \in \mathbb{F}_q^k$ to an encoded vector $(y_1, \dots, y_n) \in \mathbb{F}_q^n$. 

If a user wishes to retrieve the stored information, the strands initially undergo an amplification process, followed by sequencing: all this generates various copies for each string, which may contain errors compared to the originals. These are called ``reads''. To simplify our analysis, in this paper we will assume that no errors are made in any of these steps, hence the final output of the process is a multiset of reads, obtained without a specific order. One of the most crucial goals to achieve is the reduction of coverage depth with regards to information retrieval: this would allow to increase the efficiency of DNA sequencing, making it perform better than other storage solutions.

The starting point of this paper is a result about the coverage depth problem for DNA data storage~\cite{cover_your_bases}, when all information strands need to be recovered. Since the $k$ information strands are encoded using a generator matrix~$G \in \mathbb{F}_q^{k \times n}$, there is a one-to-one correspondence between the encoded strands and the columns of~$G$, namely the $i$-th encoded strand corresponds to the $i$-th column of the generator matrix; therefore, recovering the $i$-th information strand is equivalent to recovering the $i$-th standard basis vector, that is, it must be in the span of the already recovered columns of $G$, since we can see these columns as vectors in $\mathbb{F}_q^k$. Motivated by these results, we  define the first problem studied in this paper.

\begin{problem}[\textbf{The Coverage Depth Problem}] \label{first_problem}
    Let $G \in \mathbb{F}_q^{k \times n}$ have rank~$k$. Suppose that the columns of $G$ are drawn uniformly randomly with repetition, meaning that each column can be drawn multiple times. Compute the expected number of columns one needs to draw until all the standard basis vectors are in their $\mathbb{F}_q$-span (or equivalently until the drawn columns have rank~$k$). We denote such expectation by $\mathbb{E}[G]$.
\end{problem}

We start by showing that, in contrast to the random access coverage depth problem~\cite[Problem~1]{a_combinatorial_perspective}, 
where only a single information strand is to be recovered, the value of $\E[G]$ only depends on the row-space of $G$, i.e., on the code that the matrix~$G$ generates.

\begin{proposition} \label{independence_from_G_proposition}
    Let $G,G' \in \F_q^{k \times n}$ have the same row-space. Then $\E[G]=\E[G']$.
\end{proposition}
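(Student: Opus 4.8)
The plan is to reduce the claim to the elementary fact that multiplying all columns by a fixed invertible matrix preserves the rank of every subset of columns. The first step is to record that, since $G$ and $G'$ have the same row-space, which is $k$-dimensional because both matrices have rank $k$, there exists an invertible matrix $A \in \F_q^{k \times k}$ with $G' = AG$. Consequently, if $g_1, \dots, g_n \in \F_q^k$ denote the columns of $G$, then the columns of $G'$ are exactly $Ag_1, \dots, Ag_n$.

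The second step is to couple the two random processes. By definition, $\E[G]$ (respectively $\E[G']$) is the expected value of the stopping time obtained by drawing indices $i_1, i_2, \dots$ independently and uniformly from $\{1, \dots, n\}$ and halting as soon as the columns drawn so far attain rank $k$; here I would deliberately use the ``rank $k$'' formulation of the stopping condition rather than the ``all standard basis vectors are in the span'' one, as recorded in Problem~\ref{first_problem}. I would then run both processes on the \emph{same} realized sequence of indices $i_1, i_2, \dots$.

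The key step is immediate. For any $m$ and any indices $i_1, \dots, i_m$, multiplication by the invertible matrix $A$ is a bijection of $\F_q^k$ that preserves linear independence, so
\[
\operatorname{rank}(Ag_{i_1}, \dots, Ag_{i_m}) = \operatorname{rank}(g_{i_1}, \dots, g_{i_m}).
\]
Hence for $G$ and $G'$ the stopping condition is met at exactly the same draw, i.e. the two stopping times coincide pathwise under this coupling. Taking expectations yields $\E[G] = \E[G']$. (Equivalently, one could phrase this as: $\E[G]$ depends only on the matroid represented by the columns of $G$, and the columns of $G$ and $AG$ represent the same matroid.)

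I do not anticipate a genuine obstacle here; the only point requiring care is conceptual rather than technical. One must notice that the literal ``standard basis vectors'' version of the stopping rule is \emph{not} manifestly invariant under $G \mapsto AG$, since $A$ sends the standard basis to a different basis; it is precisely the passage to the equivalent full-span, i.e. ``rank $k$'', condition that makes the invariance transparent. This is also exactly what distinguishes the present problem from the single-strand random-access variant, where recovering one fixed $e_i$ genuinely depends on the choice of $G$.
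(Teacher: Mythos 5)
Your proof is correct and takes essentially the same route as the paper's: write $G' = AG$ with $A \in \F_q^{k \times k}$ invertible and observe that multiplication by $A$ preserves the linear dependencies (hence the rank) of every subset of columns, so the two processes stop at the same draw. The paper states this in two sentences and leaves the pathwise coupling and the rank-$k$ reformulation of the stopping rule implicit; your write-up just makes those details explicit.
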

\begin{proof}
Since $G$ and $G'$ have the same row-space, there exists an invertible matrix $A \in \F_q^{k \times k}$ with $G'=AG$. 
The statement follows from the fact that multiplying by $A$ preserves the linear dependencies among columns.
\end{proof}

The previous result shows that the quantity 
$\E[\mC]$ is well defined for a linear error-correcting code $\mC \subseteq \F_q^n$ as $\E[G]$, where $G$ is \emph{any} generator matrix of $\mC$. Therefore, we will use the symbols $\E[G]$ and $\E[\mC]$ interchangeably. We are now ready 
to state
the second problem studied in this paper.

\begin{problem}[\textbf{The Optimal Coverage Depth Problem}] \label{second_problem}
    For given values of $n$, $k$ and $q$, compute the value
$$\mathbb{E}_{\textnormal{opt}}[n,k]_q \triangleq \min\{\mathbb{E}[\mC] \, : \,  \mC \text{ is an $[n,k]_q$ code}\},$$
    and construct
    a code $\mC$ attaining the minimum. 
\end{problem}

Throughout the paper, $G \in \mathbb{F}_q^{k \times n}$  denotes a rank $k$ matrix and $\mC \subseteq \mathbb{F}_q^n$ is the $k$-dimensional code generated by $G$.
Several results on Problems~\ref{first_problem} and~\ref{second_problem} were obtained in~\cite{cover_your_bases}. We mention the most important ones for our purposes.

\begin{theorem}[\text{see \cite[Corollary~2]{cover_your_bases}}] \label{thm_MDS_bound}
    For any generator matrix~$G$ of an $[n,k]_q$ code $\mC$ we have 
    \[
    \E[G] \ge \sum_{i=0}^{k-1} \frac{n}{n-i} = n(H_n - H_{n-k}).
    \]
    Furthermore, the lower bound is attained with equality only by any generator matrix of an MDS code. 
\end{theorem}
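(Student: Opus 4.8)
The plan is to treat the drawing process as a rank-increasing Markov chain and to split the total waiting time into the successive times needed to raise the rank by one. Let $V_0 = \{0\} \subseteq V_1 \subseteq \cdots \subseteq V_k = \F_q^k$ be the random nested sequence of subspaces obtained as the span of the drawn columns, recorded at the instants when the rank strictly increases, so that $\dim V_i = i$. Writing $T_i$ for the number of draws spent passing from rank $i$ to rank $i+1$, linearity of expectation gives $\E[G] = \sum_{i=0}^{k-1} \E[T_i]$, and the whole argument reduces to lower bounding each $\E[T_i]$.

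I would then condition on the subspace $V_i$ reached at the start of phase $i$. Given $V_i$, a subsequent draw raises the rank exactly when the drawn column lies outside $V_i$; if $a_i$ denotes the number of columns of $G$ contained in $V_i$, then, by the strong Markov property, the future draws are independent of the past and each succeeds with probability $(n-a_i)/n$. Hence $T_i$ is geometric conditional on $V_i$, with $\E[T_i \mid V_i] = n/(n - a_i)$. The combinatorial heart of the lower bound is the elementary observation that $a_i \ge i$ always holds: since $V_i$ is by construction spanned by columns of $G$, it must contain at least $\dim V_i = i$ linearly independent columns, hence at least $i$ columns in total. Therefore $\E[T_i \mid V_i] = n/(n-a_i) \ge n/(n-i)$ pointwise, and averaging over $V_i$ and summing over $i$ yields $\E[G] \ge \sum_{i=0}^{k-1} n/(n-i) = n(H_n - H_{n-k})$.

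For the equality characterization I would trace the inequality back. Equality forces $a_i = i$ for every reachable $V_i$ and every $i \in \{0, \dots, k-1\}$, i.e., no $i$-dimensional span of columns may contain more than $i$ columns. If $\mC$ is MDS, then any $i+1 \le k$ columns of $G$ are linearly independent, so no $i$-dimensional subspace can contain more than $i$ of them; thus $a_i = i$ and equality holds throughout. Conversely, if $\mC$ is not MDS, some minimal dependent set of columns of size $j \le k$ exists; deleting one of its elements gives $j-1$ independent columns spanning a $(j-1)$-dimensional subspace that contains all $j$ columns of the set, so this reachable $V_{j-1}$ has $a_{j-1} > j-1$. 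Since each of those $j-1$ columns is drawn with probability at least $1/n$, this subspace occurs with positive probability, making $\E[T_{j-1}] = \E[n/(n-a_{j-1})] > n/(n-(j-1))$ and hence $\E[G] > n(H_n - H_{n-k})$.

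I expect the main obstacle to be precisely this converse equality direction: one must verify that a premature dependency is encountered with strictly positive probability along the random trajectory, and that the resulting pointwise strictness on a positive-probability event survives the conditional averaging. The forward bound, by contrast, follows cleanly from the waiting-time decomposition together with the single inequality $a_i \ge i$; the only care needed there is to justify the geometric conditional law via the Markov structure of the drawing process.
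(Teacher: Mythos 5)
Your proof is correct, but note that the paper does not actually prove this statement: Theorem~\ref{thm_MDS_bound} is imported by citation from \cite[Corollary~2]{cover_your_bases}, so there is no in-paper proof to compare against. Your argument --- decomposing the waiting time into rank-raising phases $T_i$, observing that conditionally on $V_i$ each $T_i$ is geometric with success probability $(n-a_i)/n$, and bounding $a_i \ge i$ --- is the natural $q$-analogue of the coupon-collector argument, and it is exactly the technique the paper itself uses later for the simplex code (Theorem~\ref{thm_E_k_simplex}), where the phases $s_i(\mC)$ play the role of your $T_i$; the difference is that for the simplex code $a_i$ is deterministic, whereas you correctly handle the general case by conditioning on $V_i$ and averaging. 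Your equality characterization is also sound in both directions: MDS forces $a_i = i$ for every reachable $V_i$ (any $i+1 \le k$ columns inside an $i$-dimensional space would be dependent), and conversely a minimal dependent set of size $j \le k$ yields, with probability at least $1/n^{j-1} > 0$, a reachable $V_{j-1}$ with $a_{j-1} \ge j$, which strictly inflates $\E[T_{j-1}]$ since the pointwise inequality $n/(n-a_{j-1}) \ge n/(n-j+1)$ is strict on that positive-probability event. The edge case $j=1$ (a zero column) also goes through, since then the strictness already occurs at $V_0 = \{0\}$ with probability one. One could alternatively get the inequality (though not the equality case as directly) from the pathwise domination ``time to reach rank $k$ $\ge$ time to collect $k$ distinct columns,'' but your phase-by-phase argument is self-contained and gives both parts at once.
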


\begin{theorem}[\text{see \cite[Theorem~2]{cover_your_bases}}] \label{thm_asintotic_MDS}
    Let $R$ be a constant, $0<R<1$, and for all $n$ let $\mC_n$ be an $[n, k_n=\lfloor nR \rfloor ]_{q_n}$ MDS code. We have 
    \[
    \lim_{n \rightarrow\infty} \frac{\mathbb{E}[\mC_n]}{k_n} = \frac{1}{R} \log\Big( \frac{1}{1-R} \Big).
    \]
    Furthermore, consider a sequence of MDS codes $\{ \mC_i \}_{i=1}^\infty$ with parameters $n_i, k_i$ such that $\lim_{i \rightarrow\infty} k_i/n_i = 0$. Then
    \[
    \lim_{i\rightarrow\infty} \frac{\mathbb{E}[\mC_i]}{k_i} = 1.
    \]
\end{theorem}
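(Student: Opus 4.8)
The plan is to start from the exact value of $\E[\mC_n]$ for MDS codes supplied by Theorem~\ref{thm_MDS_bound}: since an MDS code attains the lower bound with equality, we have $\E[\mC_n] = n(H_n - H_{n-k_n}) = \sum_{i=0}^{k_n-1} \frac{n}{n-i}$. Both limits then reduce to an asymptotic analysis of this closed form, and no further coding-theoretic input is needed.

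For the first limit I would invoke the standard expansion $H_m = \log m + \gamma + o(1)$ as $m \to \infty$, where $\gamma$ is the Euler--Mascheroni constant. The point is that $R<1$ forces $n-k_n \ge n(1-R) \to \infty$, so both $H_n$ and $H_{n-k_n}$ are evaluated at arguments tending to infinity and the expansion applies to each; the constant $\gamma$ then cancels, leaving $H_n - H_{n-k_n} = \log\frac{n}{n-k_n} + o(1)$. Since $k_n = \lfloor nR \rfloor$ gives $k_n/n \to R$, we get $\frac{n}{n-k_n} \to \frac{1}{1-R}$ and $\frac{n}{k_n} \to \frac{1}{R}$, and multiplying the two limits yields $\frac{\E[\mC_n]}{k_n} = \frac{n}{k_n}\left(H_n - H_{n-k_n}\right) \to \frac{1}{R}\log\left(\frac{1}{1-R}\right)$.

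For the second limit the harmonic-number expansion is inconvenient, because here $k_i$ need not tend to infinity and only $k_i/n_i \to 0$ is controlled. Instead I would squeeze directly on the sum $\frac{\E[\mC_i]}{k_i} = \frac{1}{k_i}\sum_{j=0}^{k_i-1}\frac{n_i}{n_i-j}$. Each summand lies between $\frac{n_i}{n_i}=1$ (at $j=0$) and $\frac{n_i}{n_i-k_i+1}$ (at $j=k_i-1$), so the average obeys $1 \le \frac{\E[\mC_i]}{k_i} \le \frac{n_i}{n_i-k_i+1}$. The upper bound equals $\left(1-\frac{k_i-1}{n_i}\right)^{-1}$, which tends to $1$ precisely because $k_i/n_i \to 0$, and the squeeze theorem concludes.

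Once the closed form of Theorem~\ref{thm_MDS_bound} is in hand the analysis is essentially routine; the only delicate points are checking that $n-k_n \to \infty$ in the first part, which is exactly where the hypothesis $R<1$ is used, and preferring the squeeze argument in the second part, since the harmonic expansion would fail if $k_i$ remained bounded. The main thing to recognize is thus that the two regimes call for slightly different tools rather than any single hard estimate.
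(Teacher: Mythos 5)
Your proposal is correct and essentially self-contained: the paper itself does not prove this theorem but imports it from the cited reference \cite[Theorem~2]{cover_your_bases}, so there is no internal proof to compare against. Your two-step argument is the natural one and is sound: for the first limit, the expansion $H_m = \log m + \gamma + o(1)$ applies to both $H_n$ and $H_{n-k_n}$ precisely because $n - k_n \geq n(1-R) \to \infty$, and the boundedness of $n/k_n$ ensures the $o(1)$ error survives multiplication; for the second limit, the squeeze $1 \leq \frac{1}{k_i}\sum_{j=0}^{k_i-1}\frac{n_i}{n_i-j} \leq \bigl(1-\tfrac{k_i-1}{n_i}\bigr)^{-1}$ correctly avoids any assumption that $k_i \to \infty$, which is exactly the pitfall a naive reuse of the harmonic expansion would hit. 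The only point worth making explicit is the justification that MDS codes attain the bound of Theorem~\ref{thm_MDS_bound} with equality: the theorem as stated says the bound is attained \emph{only} by MDS codes, and the converse (that every MDS code attains it) is what you actually need; it holds, and is verified in the paper via Proposition~\ref{E_k[C]_general_formula} in the example computing $\alpha(G,s) = \binom{n}{s}$ for $s \geq k$, so a one-line citation of that computation would close this small gap.
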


\begin{remark}
    Theorem~\ref{thm_MDS_bound} provides a lower bound on the expectation and solves Problem~\ref{first_problem} for MDS codes. It also solves Problem~\ref{second_problem} for any choice of parameters $n, k$ and $q$ such that there exists an $[n,k]_q$ MDS code. In particular, assuming that the MDS conjecture~\cite{mds1} holds, we can write
    \[
    \mathbb{E}_{\text{opt}}[n,k]_q = n(H_n - H_{n-k}) \text{ when } q \ge n-1.
    \]
    Lastly, Theorem~\ref{thm_asintotic_MDS} gives the asymptotic value for the minimum expectation.
\end{remark}

It is well known that MDS codes only exist over sufficiently large finite fields. In fact, it has been conjectured (and proven in several instances) that $q \ge n-1$ is a necessary condition for the existence of an MDS code, with the exception of very few parameter sets that require ``only'' $q \ge n-2$; see~\cite{mds1,mds2,mds3,mds4}.
Given the fact that field size and code length are imposed by the storage scheme setup, and it may not be possible to choose them, it is therefore natural to investigate what results can be achieved in this context when $q$ is too small to allow the existence of an MDS code.%, with particular interest in the binary case. 
That's the focus of this paper.

\section{Performance of the Simplex Code} \label{simplex_section}
When focusing on small finite fields, it is natural to consider simplex codes. %Recall that a generator matrix $G$ of the $k$-dimensional simplex code over $\F_q$ is obtained by organizing all elements of $\F_q^k$, up to non-zero scalar multiples, as columns of $G$. 
The simple structure of any generator matrix of a simplex code makes it possible to obtain a closed formula for $\E[G]$ using the $q$-analogue of a standard argument for the coupon collector's problem, solving Problem~\ref{first_problem} for this family of codes.

\begin{theorem} \label{thm_E_k_simplex}
    Let $\mathcal{C} \subseteq \mathbb{F}_q^n$ be the $q$-ary simplex code of dimension $k$, where $n = (q^k-1)/(q-1)$. We have 
    \begin{equation*} 
    \mathbb{E}[\mathcal{C}] = k + \sum_{i=1}^k \frac{q^{i-1}-1}{q^k - q^{i-1}}.
    \end{equation*}
\end{theorem}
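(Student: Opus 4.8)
The plan is to analyze the drawing process as a sequence of phases indexed by the dimension of the $\mathbb{F}_q$-span of the columns collected so far, in direct analogy with the coupon collector's problem. The structural feature of the simplex code that I would exploit is that its columns form a complete set of representatives of the $1$-dimensional subspaces of $\mathbb{F}_q^k$, i.e., the $n=(q^k-1)/(q-1)$ points of the projective space, each appearing exactly once. As a consequence, for any $j$-dimensional subspace $V \subseteq \mathbb{F}_q^k$, the number of columns lying in $V$ equals the number of projective points of $V$, namely $(q^j-1)/(q-1)$; crucially, this count depends only on $j$ and not on the particular subspace $V$.

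First I would define $T_j$ to be the number of draws made while the current span has dimension exactly $j$, for $0 \le j \le k-1$, so that the total number of draws is $T = \sum_{j=0}^{k-1} T_j$. Conditioned on having reached a span of dimension $j$ equal to some subspace $V$, a single draw increases the dimension precisely when the drawn column lies outside $V$. By the structural fact above, this happens with probability
\[
p_j = \frac{n - (q^j-1)/(q-1)}{n} = \frac{q^k - q^j}{q^k - 1},
\]
independently of $V$. Hence $T_j$ is geometric with success probability $p_j$, giving $\mathbb{E}[T_j] = 1/p_j = (q^k-1)/(q^k-q^j)$.

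Then by linearity of expectation,
\[
\mathbb{E}[\mathcal{C}] = \sum_{j=0}^{k-1} \frac{q^k-1}{q^k-q^j}.
\]
The final step is purely algebraic: re-indexing with $i = j+1$ and writing $q^k - 1 = (q^k - q^{i-1}) + (q^{i-1}-1)$ splits each summand as $1 + (q^{i-1}-1)/(q^k - q^{i-1})$, which sums to the claimed expression $k + \sum_{i=1}^k (q^{i-1}-1)/(q^k - q^{i-1})$.

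I expect the main obstacle to be justifying cleanly that each phase length $T_j$ has expectation $1/p_j$ — specifically, arguing that the conditional success probability depends only on the current dimension $j$ and not on the specific subspace built up so far. This homogeneity, namely that every $j$-dimensional subspace contains the same number $(q^j-1)/(q-1)$ of columns, is exactly what makes $T_j$ geometric with a well-defined parameter and lets linearity of expectation apply in this simple form (note that the phases need not be mutually independent, so only the path-independence of $p_j$ is required). Without this property, which is special to the simplex code, the decomposition into geometric phases would break down.
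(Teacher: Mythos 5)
Your proof is correct and follows essentially the same approach as the paper: decomposing the process into phases indexed by the dimension of the current span, observing that each phase length is geometric with success probability $p_j = (q^k - q^j)/(q^k-1)$ because every $j$-dimensional subspace contains exactly $(q^j-1)/(q-1)$ columns, and concluding by linearity of expectation. Your explicit justification of the path-independence of $p_j$ (the homogeneity of the simplex code's column set) is a point the paper treats more tersely, but the argument is the same.
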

\begin{proof}
    Fix any generator matrix $G$ of $\mC$. For $i \in \{1, \ldots, k\}$, let $s_i(\mC)$ be the random variable that governs the number of draws until the selected columns span a space of dimension $i$, when the columns previously drawn span a space of dimension $i-1$. Note that the expected value of $s_1(\mC)$ is equal to 1, since all columns of $G$ are non-zero.

    Since the columns of $G$ are the elements of $\F_q^k$ up to multiples, 
    $s_i(\mathcal{C})$ is a geometric random variable with success probability
    \[
    p_i = \frac{n - \frac{q^{i-1}-1}{q-1}}{n}.
    \]
    By the linearity of expectation and $\E[s_i(\mathcal{C})] = 1/p_i$, we therefore have
    \[
    \begin{split}
    \mathbb{E}[\mathcal{C}] &= \mathbb{E}\left[ \sum_{i=1}^k s_i(\mathcal{C}) \right] = \sum_{i=1}^k \mathbb{E}[s_i(\mathcal{C})] 
    = k + \sum_{i=1}^k \frac{q^{i-1}-1}{q^k - q^{i-1}},
    \end{split}
    \]
    as desired.
\end{proof}

It is natural to compare the values obtained for simplex codes with the lower bound of Theorem~\ref{thm_MDS_bound}: we will discuss this in Section~\ref{estimates_section}. 

We conjecture that simplex codes minimize the expectation among all codes with $n = (q^k-1)/(q-1)$, for given $q$ and~$k$. While we do not have a general proof for this result, we present some computational evidence.

\begin{example}
We ask ourselves which code $\mC$ minimizes the value $\E[\mC]$, among all $[7,3]_2$ codes, i.e., what is the value of $\mathbb{E}_{\textnormal{opt}}[7,3]_2$.
Note that we may restrict our search 
to codes whose generator matrices do not have any zero column. Indeed, if a code $\mathcal{C}$ has a zero column in one (and thus all) generator matrix,
then we can replace that zero column with any non-zero vector of $\F_2^3$. It is easy to see that the value of $\E[G]$ can only decrease this way. 
We computationally checked all codes whose generator matrices have non-zero vectors as columns, and found that
the best result is indeed obtained by a generator matrix of the simplex code. More precisely, if $\mC$ is not the simplex code, then $\E[\mC] \geq 17/4 = 4.25$, while for the simplex code we have $\mathbb{E}[\mC] = 47/12 \approx 3.917$. Therefore, simplex codes solve Problem~\ref{second_problem} for this given choice of parameters.
\end{example}

\section{A Duality Result} \label{duality_section}
Computing $\E[\mC]$ was relatively easy for a simplex code~$\mC$, thanks to the structure of its generator matrix. However, the computation is significantly more challenging for an arbitrary code. In this section, we establish a general duality result that expresses $\E[\mC]$ in terms of the combinatorial structure of the dual code $\mC^\perp$. In Section~\ref{Hamming_section}, we will apply our result to compute the value $\E[\mC]$ when $\mC$ is a Hamming code.

\begin{definition}
    Let $\mathcal{C} \subseteq \F_q^n$ be a $k$-dimensional code. A non-empty set $S \subseteq \{1, \dots, n \}$ is an \textbf{information set} for $\mC$ if $\pi_S(\mC)$, the projection map onto the coordinates indexed by $S$, has dimension $k$. Equivalently, given any generator matrix $G$ of $\mC$, $S \subseteq \{1, \dots, n \}$ is an information set for $\mC$ if the columns of $G$ indexed by $S$ form a matrix of rank $k$.
\end{definition}

\begin{definition}
    Let $\mathcal{C} \subseteq \F_q^n$ be a $k$-dimensional code and let 
    $G$ be a generator matrix of $\mC$.
    For $0 \leq s \leq n$, we denote by $g_j$ the $j$-th column of $G$. Define
    \[
    \alpha(G,s) = \left|\left\{ S \subseteq \{1, \ldots, n\} \, : \, |S| = s, \, \langle g_j \, : \, j \in S \rangle = \F_q^k\right\}\right|,
    \]
    which counts the number of information sets of cardinality $s$ of $\mC$.
\end{definition}

Following the same reasoning as Proposition~\ref{independence_from_G_proposition},
it can be checked that 
$\alpha(G,s)$ only depends
on the code $\mathcal{C}$ that $G$ generates. We will therefore use the symbols $\alpha(\mathcal{C},s)$ and $\alpha(G,s)$ interchangeably.

We start by establishing
an extension of \cite[Lemma~1]{a_combinatorial_perspective}, expressing $\mathbb{E}[\mC]$ in terms of the values $\alpha(\mathcal{C},s)$ we just introduced.
The proof is similar to that of \cite[Lemma~1]{a_combinatorial_perspective} and is therefore omitted.

\begin{proposition} \label{E_k[C]_general_formula}
For any $k$-dimensional code $\mC \subseteq \F_q^n$
we have 
    \[
    \mathbb{E}[\mathcal{C}] = nH_n -  \sum_{s=k}^{n-1} \frac{\alpha(\mathcal{C},s)}{\binom{n-1}{s}}.
    \]
\end{proposition}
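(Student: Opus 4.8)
The plan is to compare the stopping time of interest with the classical coupon-collector time. Let $T$ denote the number of draws until the selected columns span $\F_q^k$, so that $\E[\mC] = \E[T]$, and let $T_{\mathrm{full}}$ denote the number of draws until every one of the $n$ column positions has been selected at least once. Since $G$ has rank $k$, drawing all $n$ positions certainly yields a spanning set, so $T \le T_{\mathrm{full}}$ pointwise, and the classical coupon-collector analysis gives $\E[T_{\mathrm{full}}] = nH_n$. The strategy is thus to compute the gap $\E[T_{\mathrm{full}}] - \E[T]$ and identify it with the sum in the statement.

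For each $t \ge 0$ let $D_t \subseteq \{1, \dots, n\}$ be the (random) set of distinct positions drawn in the first $t$ steps; this set is non-decreasing in $t$. Applying the tail-sum formula $\E[X] = \sum_{t \ge 0}\Pr[X > t]$ to both $T$ and $T_{\mathrm{full}}$, and noting that ``$D_t$ is not an information set'' implies ``$D_t \ne \{1,\dots,n\}$'', the difference of the two tail sums reduces termwise to
\[
\E[T_{\mathrm{full}}] - \E[T] = \sum_{t \ge 0} \Pr\big[D_t \text{ is an information set and } D_t \ne \{1,\dots,n\}\big].
\]
I would then break this up by conditioning on the exact value of $D_t$, writing the right-hand side as $\sum_{S} \sum_{t \ge 0} \Pr[D_t = S]$, where $S$ ranges over all information sets with $S \ne \{1,\dots,n\}$.

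The heart of the argument is to evaluate the inner quantity $\sum_{t \ge 0}\Pr[D_t = S]$, the expected number of time steps during which exactly the positions in $S$ have been seen. I would compute it as a product of two factors. First, because $D_t$ grows by exactly one element at each ``new'' draw, the event that $D_t = S$ ever occurs coincides with the event that the first $|S|$ distinct positions to appear are precisely $S$; since the order of first appearances is a uniformly random permutation of $\{1, \dots, n\}$, this has probability $1/\binom{n}{|S|}$. Second, conditioned on reaching $S$, the number of consecutive steps spent at $S$ is governed by a geometric random variable with exit probability $(n-|S|)/n$, contributing an expected sojourn of $n/(n-|S|)$. Multiplying and simplifying via the identity $(n-s)\binom{n}{s} = n\binom{n-1}{s}$ gives the clean value $\sum_{t\ge 0}\Pr[D_t = S] = 1/\binom{n-1}{|S|}$.

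Finally I would substitute this back, group the information sets $S \ne \{1,\dots,n\}$ according to their cardinality $s$ (which ranges from $k$, the smallest size of a spanning set, up to $n-1$), and use that there are exactly $\alpha(\mC, s)$ of them of size $s$, yielding $\E[T_{\mathrm{full}}] - \E[T] = \sum_{s=k}^{n-1}\alpha(\mC,s)/\binom{n-1}{s}$. Rearranging with $\E[T_{\mathrm{full}}] = nH_n$ then gives the claim. I expect the main obstacle to be the evaluation of the sojourn sum $\sum_t \Pr[D_t = S]$: one must justify cleanly that the first-appearance order is uniformly random (so the reach probability is exactly $1/\binom{n}{s}$) and that the sojourn is genuinely geometric. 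An alternative, more computational route through inclusion--exclusion on the number of surjections onto $S$, followed by summing the resulting geometric series, should serve as a fallback if the probabilistic decomposition needs more care.
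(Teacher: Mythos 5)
Your proof is correct: the coupling with the classical coupon collector, the identification of the gap $nH_n - \E[\mC]$ with the expected sojourn times at information sets $S \neq \{1,\dots,n\}$, and the evaluation $\sum_{t\ge 0}\Pr[D_t = S] = \frac{1}{\binom{n}{s}}\cdot\frac{n}{n-s} = \frac{1}{\binom{n-1}{s}}$ are all sound (the uniform first-appearance order follows from symmetry, and the geometric sojourn from independence of the draws). The paper itself omits the proof, deferring to the analogous lemma in the cited reference, and your argument is precisely the standard tail-sum/sojourn-time reasoning that proof is based on, so you have in effect supplied the missing details rather than diverged from the paper's route.
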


\begin{remark}
    Note that the sum starts from $k$ because we need at least $k$ vectors for successfully recovering all the information strands. Equivalently, 
    \[
    \alpha(\mC,s) = 0 \, \text{ for } \, 0 \leq s \leq k-1.
    \]
\end{remark}

We illustrate how Proposition~\ref{E_k[C]_general_formula} can be used to easily compute the expectation for MDS codes.

\begin{example}
    Consider an $[n,k]_q$ MDS code $\mC$ and let $G$ be a generator matrix of $\mC$. Since $G$ is an MDS matrix, every $k$ columns of $G$ are linearly independent. Thus, we have that
    \[
    \alpha(G,s) = \begin{cases}
    0 & \text{ if } 0 \le s \le k-1, \\
    \binom{n}{s} & \text{ if } k \leq s \leq n.
    \end{cases}
    \]
    Hence, by substituting these values into the formula of Proposition~\ref{E_k[C]_general_formula}, we obtain
    \[
    \mathbb{E}[\mC] = nH_n - \sum_{s=k}^{n-1} \frac{\binom{n}{s}}{\binom{n-1}{s}},
    \]
    which simplifies to $n(H_n - H_{n-k})$ after straightforward computations.
\end{example}

We obtain a duality result by relating 
the value of $\alpha(\mathcal{C},s)$ to the structure of the dual code $\mC^\perp$. To do so,
it is convenient to introduce some auxiliary quantities. We denote the Hamming support of a vector $x \in \F_q^n$ as $\sigma(x)=\{i \, :\, x_i \neq 0\}$.
For a code $\mC \subseteq \F_q^n$ and a subset $S \subseteq \{1, \ldots, n\}$, we let $\mC(S)=\{x \in \mC \mid \sigma(x) \subseteq S\}$. The complement of a set $S$ is denoted by $S^{\mathsf{c}}=\{1, \ldots, n\} \setminus S$.

\begin{notation} \label{beta_definition}
For $1 \le \ell \le k$ and $0 \le s \le n$, let
\begin{equation*} 
\beta_{\ell}(\mathcal{C},s) = |\{ S \subseteq \{1, \ldots, n\} \, : \, |S| = s, \, \dim(\mathcal{C}(S^{\mathsf{c}}))=\ell \}|.
\end{equation*}
\end{notation}

The main tool of this section is the following result.

\begin{lemma}
    Let $\mathcal{C}$ be
    an $[n,k]_q$ code. We have
    \begin{equation} \label{beta_relation}
    \beta_{\ell}(\mathcal{C},s) = \beta_{\ell+s-k}(\mathcal{C}^\perp,n-s).
    \end{equation}
    In particular,
    \vspace{-0.1 cm}
    \begin{equation} \label{alpha_beta_formula}
    \alpha(\mathcal{C}, s) = \beta_{s-k}(\mathcal{C}^{\perp},n-s).
    \end{equation}
\end{lemma}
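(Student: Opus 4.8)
The plan is to reduce both identities to a single dimension formula relating, for a fixed coordinate set $S$, the shortened code $\mC(S^{\mathsf c})$ to the shortened dual code $\mC^\perp(S)$, and then to deduce (\ref{beta_relation}) and (\ref{alpha_beta_formula}) by pairing each $S$ with its complement $S^{\mathsf c}$. This is essentially the classical puncturing--shortening duality, repackaged in the language of the quantities $\beta_\ell$.

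First I would record the elementary bookkeeping step. Writing $\pi_S$ for the projection onto the coordinates indexed by $S$, the kernel of $\pi_S$ restricted to $\mC$ is exactly $\{c \in \mC : \sigma(c) \subseteq S^{\mathsf c}\} = \mC(S^{\mathsf c})$, so rank--nullity gives $\dim \pi_S(\mC) = k - \dim \mC(S^{\mathsf c})$. In particular $S$ is an information set precisely when $\mC(S^{\mathsf c}) = \{0\}$, so $\alpha(\mC,s)$ counts the sets $S$ of size $s$ with $\dim \mC(S^{\mathsf c}) = 0$; that is, $\alpha(\mC,s) = \beta_0(\mC,s)$.

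The core step is the identity
\[
\dim \mC^\perp(S) = |S| - k + \dim \mC(S^{\mathsf c}).
\]
To prove it I would exhibit an explicit isomorphism between $\mC^\perp(S)$ and the dual $(\pi_S(\mC))^\perp$ taken inside the coordinate space $\F_q^{|S|}$ indexed by $S$, namely the zero-extension map sending $z \in (\pi_S(\mC))^\perp$ to the vector $\tilde z \in \F_q^n$ that agrees with $z$ on $S$ and vanishes off $S$. One checks that $\tilde z \in \mC^\perp$, since $\langle \tilde z, c\rangle = \langle z, \pi_S(c)\rangle = 0$ for every $c \in \mC$, and that $\sigma(\tilde z) \subseteq S$, so $\tilde z \in \mC^\perp(S)$; conversely every $y \in \mC^\perp(S)$ is the zero-extension of $\pi_S(y)$, and $\pi_S(y) \in (\pi_S(\mC))^\perp$. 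This map is linear and bijective, hence $\dim \mC^\perp(S) = \dim (\pi_S(\mC))^\perp = |S| - \dim \pi_S(\mC)$, and substituting the rank--nullity computation from the previous step yields the displayed identity. I expect this isomorphism to be the only genuinely non-routine part of the argument; everything else is combinatorial bookkeeping.

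Finally I would pair each $S$ with $T = S^{\mathsf c}$, which is a bijection between the subsets of size $s$ and those of size $n-s$. Under this pairing the condition $\dim \mC(S^{\mathsf c}) = \ell$ translates, via the core identity applied to $S$, into $\dim \mC^\perp\big((S^{\mathsf c})^{\mathsf c}\big) = \dim \mC^\perp(S) = \ell + s - k$. Therefore $S$ is counted by $\beta_\ell(\mC,s)$ if and only if $S^{\mathsf c}$ is counted by $\beta_{\ell+s-k}(\mC^\perp, n-s)$, which is exactly (\ref{beta_relation}). Specializing to $\ell = 0$ and using $\alpha(\mC,s) = \beta_0(\mC,s)$ from the first step gives (\ref{alpha_beta_formula}); here I would simply observe that, although Notation~\ref{beta_definition} is stated for $\ell \ge 1$, the core identity holds verbatim for $\ell = 0$, which is what legitimizes this specialization.
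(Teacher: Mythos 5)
Your proposal is correct, and its skeleton matches the paper's proof: both arguments rest on the rank--nullity identity $\dim(\pi_S(\mC)) + \dim(\mC(S^{\mathsf{c}})) = k$, on a support-duality identity relating $\dim(\mC^\perp(S))$ to $\dim(\mC(S^{\mathsf{c}}))$, and on the complementation bijection $S \mapsto S^{\mathsf{c}}$ that matches the sets counted by $\beta_\ell(\mC,s)$ with those counted by $\beta_{\ell+s-k}(\mC^\perp,n-s)$. The one genuine difference is how the support-duality identity is justified. The paper imports it from the literature: it cites~\cite[Theorem~24]{ravagnani}, which gives $|\mC(S)| = |\mC|\cdot|\mC^{\perp}(S^{\mathsf{c}})|/q^{n-s}$, and rewrites it as $\dim(\mC(S)) = k-n+s+\dim(\mC^{\perp}(S^{\mathsf{c}}))$ --- exactly your core identity, stated for the dual pair. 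You instead prove it from scratch via the zero-extension isomorphism $\mC^\perp(S) \cong (\pi_S(\mC))^\perp$, i.e., the classical fact that the shortened dual code equals the dual of the punctured code; your verification of that isomorphism is correct. What your route buys is self-containedness: the lemma no longer leans on an external duality theorem, at the cost of a few extra lines of linear algebra. What the paper's route buys is brevity. Your closing remark that Notation~\ref{beta_definition} must be read as extending to $\ell = 0$ for the specialization $\alpha(\mC,s)=\beta_0(\mC,s)$ to make sense is a point the paper passes over silently, so flagging it is a small gain in precision.
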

\begin{proof}
    We consider the projection map $\pi_S : \mathcal{C} \rightarrow \mathbb{F}_q^s$ onto the coordinates indexed by $S$. Using the rank-nullity theorem we obtain
    \begin{equation} \label{rank_nullity}
    \dim(\pi_S(\mathcal{C})) + \dim(\ker(\pi_S)) = k,
    \end{equation}
    which we can rewrite as
    \begin{equation} \label{rank_nullity_2}
        \dim(\pi_S(\mathcal{C})) + \dim(\mathcal{C}(S^{\mathsf{c}})) = k.
    \end{equation} 
    Moreover, by~\cite[Theorem~24]{ravagnani} we have 
    \[
    |\mathcal{C}(S)| = \frac{|\mathcal{C}|}{q^{n-s}} |\mathcal{C}^{\perp}(S^{\mathsf{c}})|,
    \]
    i.e.,
    \begin{equation} \label{rank_nullity_dual}
    \dim(\mathcal{C}(S)) = k-n+s+\dim(\mathcal{C}^{\perp}(S^{\mathsf{c}})).
    \end{equation}
    %if we consider S^{\mathsf{c}} instead of S we get $\dim(\mathcal{C}(S^{\mathsf{c}})) = k-s+\dim(\mathcal{C}^{\perp}(S))$
    Therefore, from~\eqref{rank_nullity} we know that $\dim(\pi_S(\mathcal{C})) = t$ if and only if $\dim(\mathcal{C}(S^{\mathsf{c}})) = k-t$, and~\eqref{rank_nullity_dual} tells us that
    the latter equality is equivalent to $\dim(\mathcal{C}^{\perp}(S)) = s-t$. This shows that there is a bijection
    \vspace{-0.1 cm}
    \begin{multline*}
    \{ S \subseteq  \{1, \ldots, n\} \, : \,  |S| = s, \, \dim(\mathcal{C}(S^{\mathsf{c}})) = k-t \} \\  \quad \to \{ S \subseteq \{1, \ldots, n\} \, : \, |S| = n-s, \, \dim(\mathcal{C}^{\perp}(S^{\mathsf{c}})) = s-t  \},
    \end{multline*}
    from which we obtain the first part of the lemma. For the second part, it suffices to use the fact that $\alpha(\mathcal{C},s) = \beta_0(\mathcal{C},s)$,
    which easily follows from the definitions.
    Combining this equality with \eqref{beta_relation} we obtain the second part of the lemma.
\end{proof}

\section{Performance of the Hamming Code} \label{Hamming_section}

We wish to apply the result of Section~\ref{duality_section} to compute the value of $\E[\mC]$, where $\mC$ is a Hamming code, in order to obtain a solution to Problem~\ref{first_problem} also for this family of codes.

\begin{theorem} \label{thm_E_k_Hamming}
    Let $\mathcal{C} \subseteq \mathbb{F}_q^n$ be the $q$-ary Hamming code of redundancy $r$, where $n=(q^r-1)/(q-1)$. We have 
    \[
    \mathbb{E}[\mathcal{C}] = nH_n -  \sum_{\ell=1}^{r} \frac{1}{\binom{n-1}{n-\ell}}\frac{\prod_{i = 0}^{\ell-1} \frac{q^r-q^i}{q-1}}{\ell !}.
    \]
\end{theorem}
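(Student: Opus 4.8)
The plan is to combine Proposition~\ref{E_k[C]_general_formula} with the duality formula~\eqref{alpha_beta_formula}, exploiting the fact that the dual of the $q$-ary Hamming code of redundancy $r$ is the $q$-ary simplex code of dimension $r$. Writing $k=n-r$ for the dimension of $\mC$, formula~\eqref{alpha_beta_formula} gives $\alpha(\mC,s)=\beta_{s-k}(\mC^\perp,n-s)$, so the whole computation reduces to evaluating $\beta_\ell(\mC^\perp,m)$ for the simplex code $\mC^\perp$.

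First I would describe $\mC^\perp$ explicitly: fix a generator matrix $G^\perp$ whose columns $g_1,\dots,g_n$ form a complete set of representatives of the projective points of $\F_q^r$, so that $u \mapsto uG^\perp$ is a linear isomorphism $\F_q^r \to \mC^\perp$. For a set $S$, the codeword $uG^\perp$ lies in $\mC^\perp(S^{\mathsf{c}})$ precisely when it vanishes on $S$, that is, when $u$ is orthogonal to $g_i$ for every $i \in S$; equivalently, $u$ lies in the orthogonal complement of $\langle g_i : i \in S\rangle$. Transporting this through the isomorphism yields the key identity $\dim(\mC^\perp(S^{\mathsf{c}}))=r-\operatorname{rank}\{g_i : i \in S\}$, so $\beta_\ell(\mC^\perp,m)$ counts the size-$m$ subsets $S$ whose selected columns have rank exactly $r-\ell$.

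Next I would specialize the parameters. With $\ell=s-k=s-n+r$ and $m=n-s$, the rank condition becomes $\operatorname{rank}\{g_i : i \in S\}=r-\ell=n-s=|S|$; in other words, $\alpha(\mC,s)$ equals the number of $(n-s)$-subsets of projective points of $\F_q^r$ that are linearly independent. Counting these is a standard $q$-analogue argument: there are $\prod_{i=0}^{m-1}(q^r-q^i)$ ordered linearly independent $m$-tuples of vectors, each of the $m$ points admits $q-1$ representatives, and the ordering contributes a factor $m!$. Dividing out gives
\[
\alpha(\mC,s)=\frac{1}{(n-s)!}\prod_{i=0}^{n-s-1}\frac{q^r-q^i}{q-1}.
\]

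Finally I would substitute this into Proposition~\ref{E_k[C]_general_formula} and reindex the sum by $\ell=n-s$, which runs from $1$ to $r$ as $s$ runs from $n-1$ down to $k=n-r$, using $\binom{n-1}{s}=\binom{n-1}{n-\ell}$ to match the denominators; this produces the claimed formula. The step requiring genuine care is establishing the dimension identity $\dim(\mC^\perp(S^{\mathsf{c}}))=r-\operatorname{rank}\{g_i : i \in S\}$ and recognizing that the chosen parameters force linear independence of the selected columns. Once this is in place, the remaining work is only the routine $q$-analogue count and the index bookkeeping.
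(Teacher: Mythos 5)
Your proposal is correct and follows essentially the same route as the paper: combine Proposition~\ref{E_k[C]_general_formula} with the duality formula~\eqref{alpha_beta_formula}, recognize that the relevant $\beta$-values for the simplex dual count size-$\ell$ sets of linearly independent columns (projective points), and evaluate them by the standard $q$-analogue count $\prod_{i=0}^{\ell-1}\frac{q^r-q^i}{q-1}\big/\ell!$ before reindexing. The only cosmetic difference is that you re-derive the identity $\dim(\mathcal{C}^\perp(S^{\mathsf{c}}))=r-\operatorname{rank}\{g_i : i \in S\}$ directly via orthogonal complements, whereas the paper obtains the same fact from its rank-nullity identity~\eqref{rank_nullity_2}.
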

\begin{proof}
    Combining the formula in Proposition~\ref{E_k[C]_general_formula} with~\eqref{alpha_beta_formula}, where the dual code $\mathcal{C}^{\perp}$ is the $[n,r]_q$ simplex code, gives
    \[
    \mathbb{E}[\mathcal{C}] = n H_n - \sum_{s=n-r}^{n-1} \frac{\beta_{s-n+r}(\mathcal{C}^{\perp},n-s)}{\binom{n-1}{s}}.
    \]
    This can be rewritten as
    \[
    \mathbb{E}[\mathcal{C}] = n H_n - \sum_{\ell=1}^{r} \frac{\beta_{r-\ell}(\mathcal{C}^{\perp},\ell)}{\binom{n-1}{n-\ell}}.
    \]
Applying~\eqref{rank_nullity_2} to Notation~\ref{beta_definition} we obtain
    \begin{multline*}
    \beta_{r-\ell}(\mathcal{C}^\perp,\ell) = \\ |\{ S \subseteq \{1, \ldots, n\} : |S| = \ell, \dim(\pi_S(\mathcal{C}^\perp))=\ell \}|.
    \end{multline*}
    It remains to count the number of subsets of cardinality $\ell$ whose corresponding columns are linearly independent. To do this, we use again the fact that the columns of any generator matrix of the simplex code are all the non-zero vectors of $\mathbb{F}_q^r$ up to non-zero scalar multiples. Hence we have
    \[
    \beta_{r-\ell}(\mathcal{C}^\perp,\ell) = \frac{\prod_{i=0}^{\ell-1} \Big(\frac{q^r-1}{q-1} - \frac{q^i-1}{q-1}\Big)}{\ell!} = \frac{\prod_{i=0}^{\ell-1} \frac{q^r-q^i}{q-1}}{\ell!},
    \]
    from which the statement follows.
\end{proof}

\section{Asymptotic Estimates and Comparisons} \label{estimates_section}
In Theorems~\ref{thm_E_k_simplex} and~\ref{thm_E_k_Hamming} we provided closed formulas for $\mathbb{E}[\mC]$, where $\mC$ is a simplex or a Hamming code. %In Theorem~\ref{thm_MDS_bound}
We proceed by investigating
the difference between these values and the bound
of Theorem~\ref{thm_MDS_bound}, as the field size $q$ approaches infinity and the dimension $k$ is fixed.

\begin{notation} 
From now on we let
\begin{equation} \label{MDS_bound_formula}
    \widehat{\E}[n,k]_q \triangleq n(H_n - H_{n-k})
\end{equation}
be the lower bound stated in Theorem~\ref{thm_MDS_bound}.
We will use the Buchmann-Landau notation to describe the asymptotic growth of functions defined on an infinite set of natural numbers; see e.g.~\cite{asymp}.
\end{notation}

We start with simplex codes.

\begin{proposition} \label{prop3}
    Let $\mathcal{C} \subseteq \mathbb{F}_q^n$ be the $q$-ary simplex code of fixed dimension $k \ge 3$, where $n = (q^k-1)/(q-1)$.
    %and $k \ge 3$ is fixed.
    The following holds as $q \rightarrow \infty$: 
    \[
    \mathbb{E}[\mathcal{C}]-\widehat{\E}[n,k]_q = \frac{1}{q-1} + O \Big( \frac{1}{q^2} \Big)
    \]
    and
    \[
    \frac{\mathbb{E}[\mC]}{\widehat{\E}[n,k]_q} = \frac{k + \sum_{i=1}^k \frac{1}{q^i - 1} - \sum_{i=1}^k \frac{1}{q^k - q^{i-1}}}{k + \sum_{i=0}^{k-1} \frac{i}{\frac{q^k-1}{q-1}-i}}.
    \]
    In particular,
    \[
    \lim_{q \rightarrow \infty} \Big( \mathbb{E}[\mathcal{C}]-\widehat{\E}[n,k]_q \Big) = 0, \quad  
    \lim_{q \rightarrow \infty} \frac{\mathbb{E}[\mC]}{\widehat{\E}[n,k]_q} = 1.
    \]
    %% better to assume k \geq 3 in this proposition, in the case k = 2 we should modify the estimate for E_MDS
\end{proposition}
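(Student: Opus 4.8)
The plan is to rewrite both $\E[\mC]$ and $\widehat{\E}[n,k]_q$ into forms that make the leading behavior transparent, and then estimate the remaining terms. For the simplex expectation, I would start from the closed formula of Theorem~\ref{thm_E_k_simplex} and apply the partial-fraction identity
\[
\frac{q^{i-1}-1}{q^k-q^{i-1}} = \frac{1}{q^{k-i+1}-1} - \frac{1}{q^k-q^{i-1}},
\]
which follows from $q^k - q^{i-1} = q^{i-1}(q^{k-i+1}-1)$. Re-indexing the first sum via $j = k-i+1$ turns $\sum_{i=1}^k 1/(q^{k-i+1}-1)$ into $\sum_{j=1}^k 1/(q^j-1)$, yielding
\[
\E[\mC] = k + \sum_{i=1}^k \frac{1}{q^i-1} - \sum_{i=1}^k \frac{1}{q^k-q^{i-1}}.
\]
For the MDS bound, writing $n/(n-i) = 1 + i/(n-i)$ and summing over $0 \le i \le k-1$ gives $\widehat{\E}[n,k]_q = k + \sum_{i=0}^{k-1} i/(n-i)$ with $n = (q^k-1)/(q-1)$. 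Dividing these two expressions establishes the ratio formula immediately.

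For the difference, I would isolate the leading contribution. In $\sum_{i=1}^k 1/(q^i-1)$ the term $i=1$ contributes $1/(q-1)$, while the terms $i \ge 2$ sum to $O(1/q^2)$, since the largest of them is $1/(q^2-1)$. The remaining sum $\sum_{i=1}^k 1/(q^k-q^{i-1})$ is dominated by its $i=k$ term $1/(q^{k-1}(q-1))$, hence is $O(1/q^k)$. On the MDS side, $n \sim q^{k-1}$, so each term $i/(n-i)$ with $0 \le i \le k-1$ is $O(1/q^{k-1})$ and thus $\sum_{i=0}^{k-1} i/(n-i) = O(1/q^{k-1})$. Combining these estimates gives
\[
\E[\mC] - \widehat{\E}[n,k]_q = \frac{1}{q-1} + O\Big(\frac{1}{q^2}\Big),
\]
and both limits follow at once: the difference tends to $0$, while in the ratio the numerator and denominator each tend to $k$, so the quotient tends to $1$.

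The computations are elementary, so the only genuine point of care is the bookkeeping of the error terms, and in particular the role of the hypothesis $k \ge 3$. This hypothesis is exactly what guarantees $n \sim q^{k-1} \ge q^2$, so that $\sum_{i=0}^{k-1} i/(n-i) = O(1/q^2)$ rather than $O(1/q)$; for $k = 2$ one has $n = q+1$ and the simplex code is in fact MDS, so the difference vanishes identically and the stated leading term $1/(q-1)$ would be incorrect. I therefore expect the main (and only) obstacle to be verifying that every term other than $1/(q-1)$ is absorbed into $O(1/q^2)$ under the assumption $k \ge 3$.
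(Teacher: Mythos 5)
Your proposal is correct and follows essentially the same route as the paper's proof: the same partial-fraction rewriting of the simplex formula into $\E[\mC] = k + \sum_{i=1}^k \frac{1}{q^i-1} - \sum_{i=1}^k \frac{1}{q^k-q^{i-1}}$, the same expression $\widehat{\E}[n,k]_q = k + \sum_{i=0}^{k-1} \frac{i}{n-i}$, and the same term-by-term estimates, with the ratio formula obtained by direct division. One point in your favor: the paper bounds the correction term $\sum_{i=0}^{k-1} \frac{i}{n-i}$ only by $\binom{k}{2}/q^{k-2}$, which for $k=3$ is merely $O(1/q)$ and does not by itself justify the claimed $O(1/q^2)$ error, whereas your sharper estimate $O(1/q^{k-1})$ is exactly what is needed, and your observation that $k \ge 3$ is essential (since the $k=2$ simplex code is MDS and the difference vanishes identically) correctly pins down the role of that hypothesis.
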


\begin{proof}
For the parameters of simplex codes,~\eqref{MDS_bound_formula} can be rewritten as 
    \[
        \widehat{\E}[n,k]_q = \sum_{i=0}^{k-1} \frac{\frac{q^k-1}{q-1}}{\frac{q^k-1}{q-1}-i} = k + \sum_{i=0}^{k-1} \frac{i}{\frac{q^k-1}{q-1}-i}.
    \]
  Note that 
    \[
    \sum_{i=0}^{k-1} \frac{i}{\frac{q^k-1}{q-1}} \le \sum_{i=0}^{k-1} \frac{i}{\frac{q^k-1}{q-1}-i} \le \sum_{i=0}^{k-1} \frac{i}{q^{k-2}},
    \]
    i.e.,
    \[
    \frac{\binom{k}{2}}{\frac{q^k-1}{q-1}} \le \sum_{i=0}^{k-1} \frac{i}{\frac{q^k-1}{q-1}-i} \le \frac{\binom{k}{2}}{q^{k-2}}.
    \]
    Taking the limits as $q$ tends to infinity yields
    \[
    \lim_{q \rightarrow \infty} \sum_{i=0}^{k-1} \frac{i}{\frac{q^k-1}{q-1}-i} = 0.
    \]
    As for the expectation for the simplex code, by Theorem~\ref{thm_E_k_simplex} we have that
    \[
    \begin{split}
        \mathbb{E}[\mC]
        = k + \sum_{i=1}^k \frac{1}{q^i - 1} - \sum_{i=1}^k \frac{1}{q^k - q^{i-1}}.
    \end{split}
    \]
    Arguing in a similar way
    to what was done previously, it is not difficult to show that 
    \[
    \lim_{q \rightarrow \infty} \sum_{i=1}^{k} \frac{1}{q^k-q^{i-1}} = 0.
    \]
    Combining all of the above we obtain 
    \begin{equation} \label{difference_simplex_bound}
    \mathbb{E}[\mC] - \widehat{\E}[n,k]_q = \frac{1}{q-1} + O\Big( \frac{1}{q^2} \Big)
    \end{equation}
    and
    \begin{equation} \label{ratio_simplex_bound}
    \frac{\mathbb{E}[\mC]}{\widehat{\E}[n,k]_q} = \frac{k + \sum_{i=1}^k \frac{1}{q^i - 1} - \sum_{i=1}^k \frac{1}{q^k - q^{i-1}}}{k + \sum_{i=0}^{k-1} \frac{i}{\frac{q^k-1}{q-1}-i}}.
    \end{equation}
    Taking the limit as $q$ tends to infinity of both~\eqref{difference_simplex_bound} and~\eqref{ratio_simplex_bound} follows the second part of the statement.
\end{proof}

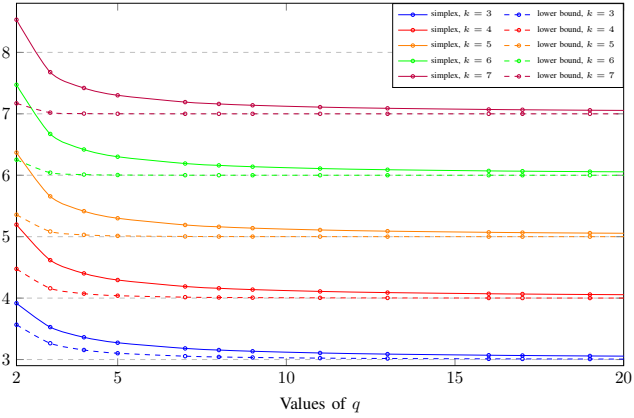
\begin{figure}[ht!]
\centering
\begin{tikzpicture}[scale=0.65]
\begin{axis}[legend columns = 2, legend style={at={(1,1)}, legend style={cells={align=left}}, anchor = north east, /tikz/column 2/.style={
                column sep=5pt}},
                legend cell align={left},
                width=14cm,height=9cm,
    xlabel={Values of $q$},
    xmin=2, xmax=20,
    ymin=2.9, ymax=8.8,
    xtick={2,5,10,15,20},
    ytick={3,4,5,6,7,8},
    ymajorgrids=true,
    grid style=dashed,
    every axis plot/.append style={},  yticklabel style={/pgf/number format/fixed}
]
\addplot+[color=blue,mark=o,mark size=1pt,smooth]
coordinates {
(2, 3.91666666666666666666666666667)
(3, 3.52777777777777777777777777778)
(4, 3.36250000000000000000000000000)
(5, 3.27333333333333333333333333333)
(7, 3.18112244897959183673469387755)
(8, 3.15451388888888888888888888889)
(9, 3.13456790123456790123456790123)
(11, 3.10674931129476584022038567493)
(13, 3.08833474218089602704987320372)
(16, 3.07008272058823529411764705882)
(17, 3.06555171088043060361399461745)
(19, 3.05803324099722991689750692521)
(23, 3.04718021424070573408947700063) %(25, 3.04313846153846153846153846154) (27, 3.03973153047227121301195375269) (29, 3.03682124455013872374157748712)
    };
\addplot+[color=blue,mark=o,mark size=1pt,smooth,dashed,mark options={solid}]
coordinates {
(2, 3.56666666666666666666666666667)
(3, 3.26515151515151515151515151515)
(4, 3.15526315789473684210526315789)
(5, 3.10229885057471264367816091954)
(7, 3.05422077922077922077922077922)
(8, 3.04205790297339593114241001565)
(9, 3.03358302122347066167290886392)
(11, 3.02284293314827665972704140643)
(13, 3.01654422925141157185356080384)
(16, 3.01105654438897330149772085956)
(17, 3.00982535090538947819564984464)
(19, 3.00790862380224968754339675045)
(23, 3.00544135826823430390068123833) %(25, 3.00462012563707478961716249852) (27, 3.00397175794526787904271347980) (29, 3.00345092125973837017049587979)
};
\addplot+[color=red,mark=o,mark size=1pt,smooth]
coordinates {
(2, 5.19642857142857142857142857143)
(3, 4.61823361823361823361823361823)
(4, 4.40252976190476190476190476190)
(5, 4.29445161290322580645161290323)
(7, 4.18909518694695923482174824817)
(8, 4.15991545376712328767123287671)
(9, 4.13839521246928654336061743469)
(11, 4.10887285832914367059647616411)
(13, 4.08963290726798341503938555059)
(16, 4.07078557835393772893772893773)
(17, 4.06613909384860083365875683141)
(19, 4.05845567962701265345185110354)
(23, 4.04741979126832116814357633839) %(25, 4.04332544393241167434715821813) (27, 4.03988025259813217838271611650) (29, 4.03694147184514446976175807928)
    };
\addplot+[color=red,mark=o,mark size=1pt,smooth,dashed,mark options={solid}]
coordinates {
(2, 4.47527472527472527472527472528)
(3, 4.15935368566947514315935368567)
(4, 4.07258651330058911604606579629)
(5, 4.03904646902749369542538422994)
(7, 4.01508806635527339671152997362)
(8, 4.01029749967479954450135082334)
(9, 4.00733795962029735181554856973)
(11, 4.00410490414575310027563247320)
(13, 4.00252348265365651363829232803)
(16, 4.00137404584155640809357241924)
(17, 4.00114993933222318400656221811)
(19, 4.00082899646252873117451056199)
(23, 4.00047178465813502794467160377) %(25, 4.00036869380057267717496112306) (27, 4.00029357558794937750145687239) (29, 4.00023755163469408531813324748)
};
\addplot+[color=orange,mark=o,mark size=1pt,smooth]
coordinates {
(2, 6.36964285714285714285714285714)
(3, 5.65671889838556505223171889839)
(4, 5.41547837885154061624649859944)
(5, 5.29995732009925558312655086849)
(7, 5.19059129237086886311989887255)
(8, 5.16080432459057780119423955040)
(9, 5.13895597062815772039901318231)
(11, 5.10912800443311343858396973566)
(13, 5.08976508674139220571214054043)
(16, 5.07084381229732228400938346006)
(17, 5.06618491466327120328726324066)
(19, 5.05848518279420970937401926180)
(23, 5.04743362576309580692252131917) %(25, 5.04333538083446115188875160018) (27, 5.03988757281312927581726417534) (29, 5.03694698272710565296604693963)
    };
\addplot+[color=orange,mark=o,mark size=1pt,smooth,dashed,mark options={solid}]
coordinates {
(2, 5.35758985586571793468345192483)
(3, 5.08475181902400247365352051294)
(4, 5.02958605733208832746244129767)
(5, 5.01285349165615541002877612003)
(7, 5.00357398187171578424435757182)
(8, 5.00213766576675991606617746648)
(9, 5.00135538088691932719748808742)
(11, 5.00062104086688409754575227086)
(13, 5.00032322709969769775627720408)
(16, 5.00014305742327896948684184468)
(17, 5.00011269129348497739611149728)
(19, 5.00007269660798011367841976625)
(23, 5.00003418125636665835693799134) %(25, 5.00002457618371203692028843202) (27, 5.00001811994679989568928550681) (29, 5.00001365116894962259353931863)
};
\addplot+[color=green,mark=o,mark size=1pt,smooth, solid]
coordinates {
(2, 7.47237903225806451612903225806)
(3, 6.67230214603951977689351426725)
(4, 6.41944867091814731757805761601)
(5, 6.30131454348468432975475228996)
(7, 6.19085602391023597021752318523)
(8, 6.16094213713930111292787537180)
(9, 6.13903333075657495018902054375)
(11, 6.10915684430782447193577230386)
(13, 6.08977774051442112524933841365)
(16, 6.07084834598930810165680139277)
(17, 6.06618827287286240840778468360)
(19, 6.05848711819786519549891519276)
(23, 6.04743437587550280150927742526) %(25, 6.04333587661455319732104612772) (27, 6.03988791104275250398160716053) (29, 6.03694721983032170584206357525)
    };
\addplot+[color=green,mark=o,mark size=1pt,smooth,dashed,mark options={solid}]
coordinates {
(2, 6.25291942422518181882805784946)
(3, 6.04162861983535380181137639481)
(4, 6.01101861845045213649267716883)
(5, 6.00384385449695657309663557180)
(7, 6.00076513696261249021753080020)
(8, 6.00040058396284291988362643836)
(9, 6.00022581405976477200007298940)
(11, 6.00008467288981466139563397161)
(13, 6.00003729206565759806020039811)
(16, 6.00001341108983882133673914333)
(17, 6.00000994303085234770593976098)
(19, 6.00000573908658483540071380411)
(23, 6.00000222919039619210419776414) %(25, 6.00000147456053754221876395958) (27, 6.00000100665841907696257662108) (29, 6.00000070609214260820858334478)
};
\addplot+[color=purple,mark=o,mark size=1pt,smooth, solid]
coordinates {
(2, 8.53168362775217613927291346646)
(3, 7.67841231283992226753169514112)
(4, 7.42062439411794917606113052033)
(5, 7.30163719143877980770302442270)
(7, 7.19090112835757459465631068832)
(8, 7.16096270158073192316684346132)
(9, 7.13904359893081141727208239503)
(11, 7.10915997927302427930866395958)
(13, 7.08977890512119623136268102007)
(16, 7.07084868522441502285774834990)
(17, 7.06618850940678614862152300253)
(19, 7.05848724019833314582356974049)
(23, 7.04743441495050363164319316437) %(25, 7.04333590037791689524446533479) (27, 7.03988792605535131827536701623) (29, 7.03694722962949472085275487301)
    };
\addplot+[color=purple,mark=o,mark size=1pt,smooth,dashed,mark options={solid}]
coordinates {
(2, 7.17122076494373792463354292582)
(3, 7.01928968708832979895330497780)
(4, 7.00384850364698545108449426723)
(5, 7.00107545237909111668013345364)
(7, 7.00015300249177291424503022583)
(8, 7.00007009610955639156967602290)
(9, 7.00003512488849263585446723605)
(11, 7.00001077634499904669826934458)
(13, 7.00000401603455982636452541969)
(16, 7.00000117346673253479053470762)
(17, 7.00000081883595113020953641168)
(19, 7.00000042287949927845384495464)
(23, 7.00000013568977918633629477423) %(25, 7.00000008257536142056046042745) (27, 7.00000005219709079961832034383) (29, 7.00000003408720118550740782995)
};
\legend{\tiny{simplex, $k=3$}, \tiny{lower bound, $k=3$},\tiny{simplex, $k=4$}, \tiny{lower bound, $k=4$},\tiny{simplex, $k=5$}, \tiny{lower bound, $k=5$},\tiny{simplex, $k=6$}, \tiny{lower bound, $k=6$},\tiny{simplex, $k=7$}, \tiny{lower bound, $k=7$}
}
\end{axis}
\end{tikzpicture}
\caption{\label{fig:averages} Expected number of reads $\mathbb{E}[\mC]$ for simplex codes from Theorem~\ref{thm_E_k_simplex} for various dimensions compared to the lower bound $\widehat{\E}[n,k]_q$.}
\end{figure}

For Hamming codes, we fix the value of the redundancy and compute an asymptotic estimate that allows us to understand how rapidly the expectation grows.

\begin{proposition}
    Let $\mC \subseteq \mathbb{F}_q^n$ be the $q$-ary Hamming code of fixed redundancy $r$, where $n = (q^r-1)/(q-1)$. The following holds as $q \rightarrow \infty$: 
    \[
    \mathbb{E}[\mC] - \widehat{\E}[n,n-r]_q \le \Big(H_r - \frac{r-1}{r}\Big)q^{r-2} + O(q^{r-3}). 
    \]
    Furthermore,
    \[
    \lim_{q\rightarrow\infty} \frac{\mathbb{E}[\mC]}{\widehat{\E}[n,n-r]_q} = 1.
    \]
\end{proposition}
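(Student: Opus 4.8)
The plan is to start from the exact expression for $\mathbb{E}[\mC]$ given by Theorem~\ref{thm_E_k_Hamming} and compare it termwise with the MDS lower bound. Since the Hamming code has dimension $k=n-r$, we have $n-k=r$, so $\widehat{\E}[n,n-r]_q = n(H_n-H_r)$. Using the identity $\binom{n}{\ell}/\binom{n-1}{\ell-1}=n/\ell$ and $\sum_{\ell=1}^r n/\ell = nH_r$, I would first rewrite the bound as $\widehat{\E}[n,n-r]_q = nH_n - \sum_{\ell=1}^r \binom{n}{\ell}/\binom{n-1}{\ell-1}$, which is precisely the value $\mathbb{E}[\mC]$ would take if each $\beta_{r-\ell}(\mC^\perp,\ell)$ were replaced by $\binom{n}{\ell}$ (the MDS count). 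Subtracting the two formulas isolates the defect from optimality:
\[
\mathbb{E}[\mC] - \widehat{\E}[n,n-r]_q = \sum_{\ell=1}^r \frac{\binom{n}{\ell} - \beta_{r-\ell}(\mC^\perp,\ell)}{\binom{n-1}{\ell-1}}.
\]

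Next I would make this sum explicit. Writing $\beta_{r-\ell}(\mC^\perp,\ell) = \frac{1}{\ell!}\prod_{i=0}^{\ell-1}\big(n - \frac{q^i-1}{q-1}\big)$ (the count of independent $\ell$-subsets, as in the proof of Theorem~\ref{thm_E_k_Hamming}) and $\binom{n}{\ell} = \frac{1}{\ell!}\prod_{i=0}^{\ell-1}(n-i)$, both products share the factors for $i=0,1$ (since $\frac{q^0-1}{q-1}=0$ and $\frac{q^1-1}{q-1}=1$), so each summand collapses to
\[
\frac{\binom{n}{\ell} - \beta_{r-\ell}(\mC^\perp,\ell)}{\binom{n-1}{\ell-1}} = \frac{n}{\ell}\left(1 - \prod_{i=2}^{\ell-1}\frac{n - \frac{q^i-1}{q-1}}{n-i}\right).
\]
This is the central algebraic simplification. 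I would then apply $1-\prod_i(1-y_i)\le \sum_i y_i$, with $y_i = \frac{(q^i-1)/(q-1)-i}{n-i}\in[0,1)$, reducing the estimate to the single double sum $\sum_{\ell=2}^r \frac{n}{\ell}\sum_{i=2}^{\ell-1}\frac{(q^i-1)/(q-1)-i}{n-i}$.

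The asymptotic extraction comes next. Since $n=(q^r-1)/(q-1)=\Theta(q^{r-1})$ and $(q^i-1)/(q-1)=\Theta(q^{i-1})$, the factor $n/(n-i)\to 1$, so each inner term is asymptotic to $\frac{1}{\ell}\big((q^i-1)/(q-1)-i\big)=\Theta(q^{i-1})$, and the summand indexed by $\ell$ is of order $q^{\ell-2}$ (dominated by $i=\ell-1$). The overall dominant contribution is thus $q^{r-2}$. The main obstacle — the part requiring genuine care — is to package the remaining coefficients cleanly: bounding each power $q^{i-1}$ appearing for $i\le \ell-1$ against the top power $q^{r-2}$ and summing the resulting $1/\ell$ weights over $\ell$ is what produces the harmonic-type constant $H_r - \frac{r-1}{r}$, while the lower-order pieces (from $i<r-1$ and from the expansions of $n/(n-i)$ and $(q^i-1)/(q-1)$) are absorbed into $O(q^{r-3})$. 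I would note explicitly that this upper bound is not tight — the genuine leading coefficient is smaller — but it is clean and suffices for what follows.

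Finally, the ratio statement follows from the first part together with the sheer size of $\widehat{\E}$. Since $n=\Theta(q^{r-1})$ and $H_n-H_r=\Theta(\log q)$, we have $\widehat{\E}[n,n-r]_q = n(H_n-H_r)=\Theta(q^{r-1}\log q)$, whereas the difference estimated above is $O(q^{r-2})$. Hence
\[
\frac{\mathbb{E}[\mC]}{\widehat{\E}[n,n-r]_q} = 1 + \frac{\mathbb{E}[\mC] - \widehat{\E}[n,n-r]_q}{\widehat{\E}[n,n-r]_q} = 1 + O\Big(\tfrac{1}{q\log q}\Big) \longrightarrow 1
\]
as $q\to\infty$, which is the desired limit.
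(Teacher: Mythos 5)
Your proposal is correct, and at the strategic level it is the same as the paper's: both proofs start from the closed formula of Theorem~\ref{thm_E_k_Hamming}, subtract $\widehat{\E}[n,n-r]_q=n(H_n-H_r)$, and extract the asymptotics in $q$ with $r$ fixed; for the ratio, your observation that $\widehat{\E}[n,n-r]_q=\Theta(q^{r-1}\log q)$ swamps the $O(q^{r-2})$ difference is the same argument the paper phrases by comparing the leading terms $q^{r-1}\log\big(\tfrac{q^r-1}{q-1}\big)$. What you add is a transparent organization of what the paper hides behind ``lengthy computations'': rewriting the bound as $nH_n-\sum_{\ell=1}^{r}\binom{n}{\ell}/\binom{n-1}{\ell-1}$ so that the difference becomes the nonnegative defect sum $\sum_{\ell}\big(\binom{n}{\ell}-\beta_{r-\ell}(\mC^\perp,\ell)\big)/\binom{n-1}{\ell-1}$, cancelling the common $i=0,1$ factors, and applying $1-\prod_i(1-y_i)\le\sum_i y_i$. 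All of these steps check out.

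The one step that does not hold up as written is the final ``packaging.'' Bounding each power $q^{i-1}$ by $q^{r-2}$ and summing $1/\ell$ weights does not produce the constant $H_r-\frac{r-1}{r}$. Since $r$ is fixed, every term of your double sum except $(\ell,i)=(r,r-1)$ is $O(q^{r-3})$, so your computation naturally yields the sharper bound $\frac{1}{r}q^{r-2}+O(q^{r-3})$; no harmonic constant appears at all. Moreover, the literal reading of your packaging (bounding all $\ell-2$ inner terms by $q^{r-2}/\ell$ and summing) gives the constant $\sum_{\ell=3}^{r}\frac{\ell-2}{\ell}=r+1-2H_r$, which \emph{exceeds} $H_r-\frac{r-1}{r}$ once $r\ge 6$, so that route would fail; the milder reading (bounding only the leading inner terms) gives $H_r-\frac{3}{2}$, which works but is still not the stated constant. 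The fix is one line: since $H_r\ge 1$, we have $\frac{1}{r}\le H_r-1+\frac{1}{r}=H_r-\frac{r-1}{r}$, so your sharper estimate implies the stated inequality a fortiori --- consistent with your own (correct) remark that the stated leading coefficient is not tight. With that sentence replacing the packaging step, your proof is complete, and in fact stronger than the proposition as stated.
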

\begin{proof}
    The difference between the formula obtained in Theorem~\ref{thm_E_k_Hamming} and~\eqref{MDS_bound_formula} gives
    \[
    \mathbb{E}[\mC] - \widehat{\E}[n,n-r]_q = nH_r - \sum_{\ell=1}^{r} \frac{1}{\binom{n-1}{n-\ell}}\frac{\prod_{i = 0}^{\ell-1} \frac{q^r-q^i}{q-1}}{\ell !}.
    \]
    After lenghty computations we obtain
    \[
    \begin{split}
        \mathbb{E}&[\mC] - \widehat{\E}[n,n-r]_q \\ &\le nH_r - q^{r-1}H_r - \Big( 1-\frac{1}{r} \Big)q^{r-2} + O(q^{r-3}),
    \end{split}
    \]
    which simplifies to
    \[
    \begin{split}
        \mathbb{E}[\mC] - \widehat{\E}[n,n-r]_q  %&\le \Big(H_k + \sum_{i=1}^{k-1} \frac{i-1}{i} +1 \Big)q^{k-2} + O(q^{k-3}) \\
        \le \Big(H_r - \frac{r-1}{r}\Big)q^{r-2} + O(q^{r-3}).
    \end{split}
    \]
    The second part of the statement follows from the fact that the leading term of both $\mathbb{E}[\mC]$ and $\widehat{\E}[n,n-r]_q$, as $q$ goes to infinity, is $\smash{q^{r-1}\log\Big(\frac{q^r-1}{q-1}\Big)}$.
\end{proof}

It is also interesting to analyze the asymptotic behavior when the field size $q$ is fixed
%, i.e., by fixing the field size $q=2$
and we let~$k$ go to infinity. The next two propositions do this for simplex and Hamming codes, respectively. Their proofs are omitted and will appear in the extended version of this work.

\begin{proposition}
    Let $\mC \subseteq \mathbb{F}_q^n$ be the $q$-ary %$k$-dimensional
    simplex code of dimension $k$, where $q$ is fixed and $n=(q^k-1)/(q-1)$. We have
    \[
    \lim_{k\rightarrow\infty} \Big(\mathbb{E}[\mC] - \widehat{\E}[n,k]_q\Big) = \sum_{i=1}^{\infty} \frac{1}{q^i -1} 
    \]
    and
    \[
    \lim_{k\rightarrow\infty} \frac{\mathbb{E}[\mC]}{ \widehat{\E}[n,k]_q} = 1.
    \]
\end{proposition}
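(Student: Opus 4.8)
The plan is to start from the closed form for $\E[\mC]$ already rewritten in the proof of Proposition~\ref{prop3}, namely
$$\E[\mC] = k + \sum_{i=1}^k \frac{1}{q^i-1} - \sum_{i=1}^k \frac{1}{q^k - q^{i-1}},$$
together with the expression for the lower bound specialized to the simplex-code parameters,
$$\widehat{\E}[n,k]_q = k + \sum_{i=0}^{k-1} \frac{i}{\frac{q^k-1}{q-1}-i}.$$
Subtracting, the two leading terms $k$ cancel, and the difference $\E[\mC]-\widehat{\E}[n,k]_q$ splits into one ``main'' sum $\sum_{i=1}^k \frac{1}{q^i-1}$ minus two error sums. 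The whole argument then reduces to two independent tasks: showing the main sum converges to $\sum_{i=1}^\infty \frac{1}{q^i-1}$, and showing both error sums vanish as $k\to\infty$ with $q$ fixed.

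For the main sum, I would bound $\frac{1}{q^i-1} \le \frac{2}{q^i}$ for all $q\ge 2$ and $i\ge 1$, so the partial sums are dominated by the convergent geometric series $2\sum_{i\ge 1} q^{-i} = \frac{2}{q-1}$. Hence $\sum_{i=1}^k \frac{1}{q^i-1}$ increases to the finite constant $\sum_{i=1}^\infty \frac{1}{q^i-1}$, which depends only on $q$.

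For the error sums, I would bound denominators from below. Since $q^k-q^{i-1} \ge q^{k-1}(q-1)$, the first error sum satisfies $\sum_{i=1}^k \frac{1}{q^k-q^{i-1}} \le \frac{k}{q^{k-1}(q-1)} \to 0$. For the second, using $i\le k-1$ gives $\sum_{i=0}^{k-1} \frac{i}{\frac{q^k-1}{q-1}-i} \le \frac{\binom{k}{2}}{\frac{q^k-1}{q-1}-(k-1)} \to 0$, since the numerator grows polynomially and the denominator exponentially in $k$; this mirrors the sandwich estimate already applied to the same sum in Proposition~\ref{prop3}. Combining these with the previous paragraph yields the first limit. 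For the ratio I would write $\frac{\E[\mC]}{\widehat{\E}[n,k]_q} = 1 + \frac{\E[\mC]-\widehat{\E}[n,k]_q}{\widehat{\E}[n,k]_q}$; the numerator converges to the finite constant just found, while $\widehat{\E}[n,k]_q \ge k \to \infty$, so the correction term tends to $0$ and the ratio to $1$.

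I do not anticipate a serious obstacle: every sum is either geometrically convergent or polynomial-over-exponential and hence negligible. The only point requiring care is that, unlike the $q\to\infty$ regime of Proposition~\ref{prop3}, here the main sum does \emph{not} vanish but contributes the nonzero constant $\sum_{i=1}^\infty \frac{1}{q^i-1}$; one must therefore isolate this sum from the two error terms rather than absorbing everything into a single asymptotic estimate. The explicit dominating bounds above make the simultaneous passage to the limit transparent.
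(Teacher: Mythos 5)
Your proof is correct. Note that the paper itself omits the proof of this proposition (it is deferred to the extended version), so there is nothing in the text to compare against; your argument is nonetheless the natural one, reusing exactly the decomposition $\E[\mC] = k + \sum_{i=1}^k \frac{1}{q^i-1} - \sum_{i=1}^k \frac{1}{q^k-q^{i-1}}$ and the rewriting of $\widehat{\E}[n,k]_q$ that the paper establishes in the proof of Proposition~\ref{prop3}, with the two error sums killed by polynomial-over-exponential bounds and the ratio handled via $\widehat{\E}[n,k]_q \ge k \to \infty$. All of your estimates check out, including the key point you correctly flag: in the fixed-$q$ regime the sum $\sum_{i=1}^k \frac{1}{q^i-1}$ no longer vanishes but converges to the stated constant.
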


The last result we present holds in the binary case.

\begin{proposition}
    Let $\mC \subseteq \mathbb{F}_2^n$ be the binary Hamming code of redundancy $r$. The following holds as $r \rightarrow \infty$:
    \[
    \mathbb{E}[\mC] - \widehat{\E}[n,n-r]_2 \le (H_{2^r-1} - H_r - 1)2^r + O(2^{r-1}). 
    \]
    Furthermore,
    \[
    \lim_{r\rightarrow\infty} \frac{\mathbb{E}[\mC]}{\widehat{\E}[n,n-r]_2} \leq H_{2^r-1}-H_r.
    \]
\end{proposition}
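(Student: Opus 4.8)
The plan is to specialize the closed formula of Theorem~\ref{thm_E_k_Hamming} to the binary case $q=2$, where $n=2^r-1$ and the code has dimension $k=n-r$, and to subtract the lower bound~\eqref{MDS_bound_formula}. Writing
\[
T_\ell=\frac{1}{\binom{n-1}{\ell-1}}\,\frac{\prod_{i=0}^{\ell-1}(2^r-2^i)}{\ell!},
\]
Theorem~\ref{thm_E_k_Hamming} reads $\E[\mC]=nH_n-\sum_{\ell=1}^r T_\ell$, while $\widehat{\E}[n,n-r]_2=n(H_n-H_r)$, so the difference collapses to the clean expression $\E[\mC]-\widehat{\E}[n,n-r]_2=nH_r-\sum_{\ell=1}^r T_\ell$, exactly mirroring the identity used in the previous proposition. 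The whole task thus reduces to producing a good lower bound on $\sum_{\ell=1}^r T_\ell$.

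The key step is to reinterpret each term probabilistically. Using $\binom{n-1}{\ell-1}=\tfrac{\ell}{n}\binom{n}{\ell}$ together with the fact (established in the proof of Theorem~\ref{thm_E_k_Hamming}) that $\prod_{i=0}^{\ell-1}(2^r-2^i)/\ell!$ counts the linearly independent $\ell$-subsets among the $n=2^r-1$ nonzero vectors of $\F_2^r$, I would write $T_\ell=\tfrac{n}{\ell}P_\ell$ with
\[
P_\ell=\prod_{i=0}^{\ell-1}\frac{2^r-2^i}{2^r-1-i},
\]
the probability that $\ell$ distinct uniformly random nonzero vectors are independent. Then $\E[\mC]-\widehat{\E}[n,n-r]_2=n\sum_{\ell=1}^r\tfrac{1-P_\ell}{\ell}$, and it remains to bound $1-P_\ell$. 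Since $2^r-1-i\le 2^r$, each factor satisfies $\tfrac{2^r-2^i}{2^r-1-i}\ge 1-2^{i-r}$, so the elementary inequality $\prod_i(1-x_i)\ge 1-\sum_i x_i$ gives $P_\ell\ge 1-\sum_{i=2}^{\ell-1}2^{i-r}\ge 1-2^{\ell-r}$, hence $1-P_\ell\le 2^{\ell-r}$. Summing the resulting geometric series yields
\[
\E[\mC]-\widehat{\E}[n,n-r]_2\le n\,2^{-r}\sum_{\ell=1}^r\frac{2^\ell}{\ell},
\]
and since $n2^{-r}<1$ this bounds the difference by $\Theta(2^r/r)$; the claimed bound $(H_{2^r-1}-H_r-1)2^r+O(2^{r-1})$, which is of order $r2^r$, then follows immediately after substituting $H_{2^r-1}=r\log 2+\gamma+O(2^{-r})$ and comparing orders.

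For the ratio I would divide the difference bound by $\widehat{\E}[n,n-r]_2=(2^r-1)(H_{2^r-1}-H_r)$, so that
\[
\frac{\E[\mC]}{\widehat{\E}[n,n-r]_2}=1+\frac{\E[\mC]-\widehat{\E}[n,n-r]_2}{(2^r-1)(H_{2^r-1}-H_r)},
\]
and observe that the correction term is $O(1/(r(H_{2^r-1}-H_r)))$, which drives the ratio to $1$ while keeping it below $H_{2^r-1}-H_r$ (the latter diverging). The main obstacle I anticipate is that the expansion technique of the previous proposition does not transfer: in the fixed-$r$, $q\to\infty$ regime only the single index $\ell=r$ contributes an $O(1/q)$ correction and all others are negligible, whereas here $r\to\infty$ forces control of the entire band of indices $\ell$ close to $r$ at once. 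For those indices $P_\ell$ is bounded away from $1$ (governed by the limiting product $\prod_{j\ge 1}(1-2^{-j})$), so the termwise $1/q$-expansion breaks down. The virtue of the crude inequality $1-P_\ell\le 2^{\ell-r}$ is precisely that it applies uniformly across all $\ell$ and lets me sum everything at once, at the cost of not recovering the sharpest constant in the leading term.
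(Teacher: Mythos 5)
Your argument is correct, and it actually proves more than the statement asks. A direct comparison with the paper is not possible for this particular proposition: it is one of the two results whose proofs the paper explicitly omits and defers to the extended version, so the natural reference point is the paper's proof of the preceding proposition (Hamming codes with fixed redundancy $r$ and $q\to\infty$). Your first step coincides exactly with that proof: specializing Theorem~\ref{thm_E_k_Hamming} to $q=2$ and subtracting~\eqref{MDS_bound_formula} gives $\E[\mC]-\widehat{\E}[n,n-r]_2=nH_r-\sum_{\ell=1}^r T_\ell$. From there you take a genuinely different route. Instead of a termwise asymptotic expansion --- which, as you correctly observe, cannot work in the regime $r\to\infty$, since all indices $\ell$ near $r$ contribute and $P_\ell$ stays bounded away from $1$ --- you rewrite $T_\ell=\frac{n}{\ell}P_\ell$, where $P_\ell=\prod_{i=0}^{\ell-1}\frac{2^r-2^i}{2^r-1-i}$ is the probability that $\ell$ distinct nonzero vectors of $\F_2^r$ are linearly independent, so that the difference becomes $n\sum_{\ell=1}^r(1-P_\ell)/\ell$, and you then bound $1-P_\ell\le 2^{\ell-r}$ uniformly in $\ell$ using $\frac{2^r-2^i}{2^r-1-i}\ge 1-2^{i-r}$ and the elementary inequality $\prod_i(1-x_i)\ge 1-\sum_i x_i$ for $x_i\in[0,1]$. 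Every step checks out; in particular, starting your sum at $i=2$ is legitimate because the factors with $i=0,1$ are exactly equal to $1$. The payoff of this route is considerable: your bound shows the difference is $O(2^r/r)$, which is far stronger than the claimed bound $(H_{2^r-1}-H_r-1)2^r+O(2^{r-1})$, a quantity of order $r2^r$; consequently you obtain the sharper conclusion $\lim_{r\to\infty}\E[\mC]/\widehat{\E}[n,n-r]_2=1$, which immediately implies the stated inequality against the diverging quantity $H_{2^r-1}-H_r$. Two minor points to tighten. First, deducing the stated inequality from your stronger bound tacitly uses that $(H_{2^r-1}-H_r-1)2^r\ge 0$, which holds for all $r\ge 4$ and is therefore harmless in an asymptotic statement. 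Second, the estimate $\sum_{\ell=1}^r 2^\ell/\ell=\Theta(2^r/r)$ is asserted without proof; it follows by splitting the sum at $\lfloor r/2\rfloor$, and in any case the cruder bound $\sum_{\ell=1}^r 2^\ell/\ell\le 2^{r+1}$, which makes your upper bound at most $2n=O(2^{r-1})$, already suffices for the inequality as stated.
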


\section{Conclusions and Future Work}
We studied the coverage depth problem, which aims to reduce sequencing costs in DNA storage systems,
while ensuring efficiency and high-accuracy retrieval. We focused on codes defined over small fields, solving the problem
for various code families, and comparing the values we obtained with theoretical bounds. The computations rely on a duality result linking the performance of a code to that of the dual code.

Future work will focus on determining if simplex codes offer the best performance for their parameters, 
and on solving the coverage depth problem for other code families defined over small fields, such as Reed-Muller and Golay codes.

\end{document}